\begin{document}

\title{Shot-Noise Processes in Finance}

\author{Thorsten Schmidt}
\address{University Freiburg, Dep. of Mathematical Stochastics, Eckerstr. 1, 79104 Freiburg, Germany. Email: Thorsten.Schmidt@stochastik.uni-freiburg.de}

\date{\today}

\maketitle

\begin{abstract}
Shot-Noise processes constitute a useful tool in various areas, in particular in finance. They allow to
model abrupt changes in a more flexible way than processes with jumps and hence are an ideal tool for modelling stock prices,  credit portfolio risk, systemic risk, or electricity markets. Here we consider a general formulation of shot-noise processes, in particular time-inhomogeneous shot-noise processes. This flexible class allows to obtain the Fourier transforms in explicit form and is highly tractable. We prove that Markovianity is equivalent to exponential decay of the noise function.  Moreover, we study the relation to semimartingales and equivalent measure changes which are essential for the financial application. In particular we derive a drift condition which guarantees absence of arbitrage. Examples include the minimal martingale measure and the Esscher measure. 
\end{abstract}

\keywords{shot-noise processes; martingale measure; semimartingale; Markovianity; minimal martingale measure, Esscher measure}

\section{Introduction}

Shot-noise processes constitute a well-known tool for modelling sudden changes (\emph{shots}), followed by a typical following pattern (\emph{noise}). In this regard, they are more flexible than other approaches simply utilizing jumps and this led to many applications in physics, biology and, with an increasing interest, in finance. 
Quite remarkably, shot-noise effects were already introduced in the early 20th century, see \cite{Schottky1918,Campbell1909b,Campbell1909a}, sometimes also referred to as Schottky-noise. First fundamental treatments were only developed many years later with\footnote{The works stem from different authors, Stephen Oswald Rice and John Rice.} \cite{Rice1944,Rice1945} and \cite{Rice1977}.
Applications of shot-noise processes also arise in insurance mathematics, marketing, and even astronomy - see the survey article \cite{BondessonSN}.
The first appearances in a finance context seem to be \cite{Samorodnitsky1995,Chobanov1999} while in insurance mathematics this class of processes were studied even earlier, see \cite{klüppelberg2003} for literature in this regard.

In a general form, denote by $0 < T_1 < T_2 < \dots$ the arrival times of the shots, and by $(H(.,T): T \in \R_{\ge 0})$ a family of stochastic processes representing the noises, then a \emph{shot-noise process} $S$ is given by the superposition
\begin{align} \label{SN:1} 
S_t = \sum_{i \ge 1} \ind{t \le T_i} H(t,T_i), \quad t \ge 0; 
\end{align}
an example at this level of generality can be found in \cite{SchmidtStute07}.
Of course, absolute convergence of the sum needs to be justified, typically by making assumptions on the arrival times together with suitable restrictions on the noise processes. For the consideration of stationarity, the process is often extended to the full real line.

At this level of generality, shot-noise processes extend compound Poisson processes significantly and neither need to be Markovian nor semimartingales.
While the definition in \eqref{SN:1} is very general, more restrictions will be needed to guarantee a higher level of tractability. In this paper we will focus on shot-noise processes which are semimartingales. To the best of our knowledge all articles, including \cite{Rice1977} and many others, assume that the noises are i.i.d.~and independent from the arrival times of the shots. The most common assumption even leads to a piecewise deterministic Markov process: this is the case, if the noise processes are given by
 $ H(t,T_i) = U_i e^{-a(t-T_i)}$ with i.i.d.~$(U_i)_{i \ge 1}$, independent from $(T_i)_{i \ge 1}$, and $a \in \R$.
We will show later that this is essentially the only example where Markovianity is achieved. 
More general cases allow for different decay, as for example a power-law decay, see  e.g.~\cite{Lowen1990}, or do not assume a multiplicative structure for the jump heights $(U_i)$. These cases can be summarized under the assumption that 
\begin{align}\label{snpaper}
H(t,T_i) = G(t-T_i,U_i), \qquad t \ge 0,\ i \ge 1, 
\end{align}
with some general random variables $(U_i)$ and a suitable (deterministic) function $G$.

The obtained class of processes is surprisingly tractable, and the reason for this is that the Fourier and Laplace transforms of $S$ are available in explicit form, depending on the considered level of generality. 
Due to linearity of the integral, the class is stable under integration,  a property  shared by affine processes and of high importance for applications in interest rate markets and credit risk, see \cite{GasparSchmidt10}.

A branch of literature considers limits of shot-noise processes when the intensity of the shot arrivals increases and shows, interestingly, that limits of this class of processes have fractional character, see \cite{Lane1984,Lowen1990,KlueppelbergKuehn2004}, and the early studies in insurance mathematics. 

The application of shot-noise processes to the modelling of consumer behaviour has been suggested in \cite{KopperschmidtStute2009,KopperschmidtStute2013}, wherein also the necessary statistical tools have been developed. The key in this approach is that i.i.d.\ shot-noise processes are at hand which allows a good access to statistical methodologies.

In the financial and insurance community they have been typically
used to efficiently model shock effects, see for example \cite{DassiosJang2003},  \cite{AlbrecherAsmussen2006}, \cite{SchmidtStute07}, \cite{AltmannSchmidtStute08},  \cite{JangHerbertssonSchmidt}, \cite{SchererSchmidSchmidt12},  and references therein. Besides this, in \cite{MorenoSerranoStute2011} an estimation procedure in a special class of shot-noise processes utilizing the generalized method of moments (GMM) is developed.

The paper is organized as follows: in section \ref{sec:setup} we introduce a suitably general formulation of shot-noise processes and derive their conditional characteristic function. Moreover, we study the connection to semimartingales and Markov processes. Proposition \ref{prop:Markovian} proves that exponential decay is equivalent to Markovianity of the shot-noise process. In Section \ref{sec:FM} we propose a model for stocks having a shot-noise component. After the study of equivalent and absolutely continuous measure changes we obtain a drift condition implying absence of arbitrage and give an example where independence and stationarity of increments holds under the objective and the equivalent martingale measure.

\section{Shot-noise processes}\label{sec:setup}
Our focus will lie on shot-noise processes satisfying \eqref{snpaper} and the detailed study of this flexible class. 
Consider a  filtered probability space $(\Omega,\cF,\bbF,\P)$ where the filtration $\bbF=(\cF_t)_{t \ge 0}$ satisfies the usual conditions, i.e.\ $\bbF$ is right-continuous and $A\subset B \in \cF$ with $\P(B)=0$ implies $A \in \cF_0$. By $\cO$ and $\cP$ we denote the optional, respectively predictable, $\sigma$-fields, generated by the c\`adl\`ag, respectively c\`ag, processes.

We will allow for a  marked point process as driver\footnote{We consider here for simplicity $\R^d$ as mark space, while $\R^d$ can be replaced by a general Lusin space, see \cite{BKR} in this regard.}, generalizing previous literature. In this regard, consider a sequence of increasing stopping times $0<T_1<T_2<\dots$ and a sequence of $d$-dimensional random variables $U_1,U_2,\dots$. The double sequence $Z=(T_i,U_i, i \ge 1)$ is called \emph{marked point process}. Such processes are well-studied in the literature and we refer to \cite{Bremaud1981} for further details and references. We consider one-dimensional shot-noise processes only, a generalization to more (but finitely many) dimensions is straightforward; for the more general case see e.g.\ \cite{BassanBona87} for shot-noise random fields.

\begin{definition}   \label{def:shotnoise}
If $Z=(T_i,U_i, i \ge 1)$ is a marked point process  and $G:\R_+\times\R^d \to \R$ a   measurable function, we call a stochastic process $S=(S_t)_{t \ge 0}$ having the representation  
\begin{align}\label{eq:defshotnoise}
  S_t = \sum_{i=1}^{\infty} \ind{T_i \le t} G(t-T_i,U_i), \qquad t \ge 0,
\end{align}
a \emph{shot-noise process}. If the process $\sum_{T_i \le t} U_i,\ t \ge 0$ has independent increments, we call $S$ an \emph{inhomogeneous shot-noise process} and if the
increments are moreover identically distributed $S$ is called \emph{standard shot-noise process}. 
\end{definition}
The \emph{classical shot-noise process} is obtained when $G(t,u)$ does not depend on $u$, see \cite{BondessonSN} for links to the literature on this class.
Time-inhomogeneous L\'evy processes have independent increments and hence may serve  as a useful  class of driving processes; see \cite{JacodShiryaev} for an in-depth study of processes with independent increments and, for example, \cite{Sato,ContTankov} for a guide to the rich literature on L\'evy processes. The interest in driving processes beyond processes with independent increments can be traced back to \cite{Ramakrishnan53,Woollcott73,VSchmidt87} -- only under additional assumptions explicit formulae can be obtained.

Note that absolute convergence of the infinite sum in \eqref{eq:defshotnoise} is implicit in our assumption and needs not be true in general. However, when the stopping times $(T_i)_{i \ge 1}$ have no accumulation point, this will always hold. A precise definition of this technical fact will utilize the relation to random measures and the associated compensators, which we introduce now.

To the marked point process $Z$ we associate an  integer-valued random measure $\mu$ on $\R_+ \times \R^d$ by letting
\begin{align} \label{def:mu}
    \mu_t(A) = \mu([0,t] \times A) :=  \sum_{i \ge 1} \ind{U_i \in A} \ind{T_i \le t}, \quad t \ge 0 
  \end{align}
for any $A \in \cB(\R^d)$. Sometimes we consider the to $Z$ associated process  $Z'=(Z'_t)_{t \ge 0}$ of accumulated jumps given by  $Z'_t = \sum_{i \ge 1} U_i \ind{T_i \le t}$. As usual, we define  $\tilde \Omega = \Omega \times \R_{\ge 0} \times \R^d$, $\tilde \cP=\cP  \otimes \cB(R^d)$, and $\tilde \cO = \cO \otimes \cB(\R^d)$. A $\tilde \cO$-measurable function $W$ on $\tilde \Omega$ is called \emph{optional}. For  an optional function $W$ and a random measure $\mu$ we define 
$$ W*\mu_t = \int_{[0,t] \times \R^d} W(s,x) \mu(ds,dx), \quad t \ge 0,$$
if $\int_{[0,t] \times \R^d} |W(s,x)| \mu(ds,dx)$ is finite, and $W*\mu_t=+\infty$ otherwise. 

From Definition \ref{def:shotnoise}, we obtain that a shot-noise process $S$ has the representation
\[
  S_t =  \int_0^t \int_{\R^d} G(t-s,x) \mu(ds,dx), \quad t \ge 0
\]
and in Lemma \ref{lem:semimart} we show that, if $G$ is absolutely continuous, then $S$ is a semimartingale.

The \emph{compensator} of $\mu$ is the unique, $\bbF$-predictable random measure  $\nu$ such that
$$ \E[W*\mu_\infty] = \E[W * \nu_\infty] $$
 for any non-negative $\tilde \cP$-measurable function $W$ on $\tilde \Omega$, see Theorem II.1.8 in \cite{JacodShiryaev}.

Some properties of the marked point process $Z$ can be determined from the compensator: if $\nu$ is deterministic (i.e.\ does not depend on $\omega$), then $Z'$ has independent increments. Moreover, if the compensator additionally does not depend on time, i.e.\ $\nu(dt,dx)=\nu(dx)dt $, then $Z'$ also has stationary increments. Under the additional assumption that the jumps have an infinitely divisible distribution we obtain the important special case that $Z'$  is a L\'evy process.

\begin{example}[Exponential decay]
An important special case is the well-known case when the decay is exponential. We will later show that this is essentially the only case when $S$ is Markovian. Consider $d=1$, assume that $\nu([0,t],\R)< \infty$ for all $t \ge 0$ and denote $Z'_t = \sum_{T_i \le t} U_i, t \ge 0$.
When $G(t,x) = x e^{-bt}$, we obtain $\partial_t G(t,x) = -b G(t,x)$ and $G(0,x)=x$, such that, by It\^o's formula, 
\[ 
  S_t = \int _0^t - b S_u du + Z'_t.
\] 
Hence, if $Z'$ has independent increments, then $S$ is  a Markov process, in particular, an Ornstein-Uhlenbeck process. 
\end{example}

We give some further useful specifications of shot-noise processes to be used in the following.

\begin{example}\label{Ex Jump Types}
Specific choices of the noise function $G$ lead to processes with independent increments, Markovian, and non-Markovian processes.
\begin{enumeratei}
\item A \emph{jump to a new level} (with $d=1$ and $G(t,x) = x$). Then $Z'=S$ and $S$ has the same properties, as for example independent and stationary increments, such that $S$ is a L\'evy process.
\item We say that $S$ has \emph{power-law decay}  when 
$$G(t,x) = \frac{x}{1+ct}$$
with some $c>0$. This case allows for long-memory effects and heavy clustering, compare \cite{MorenoSerranoStute2011}. In this case, the noise decay is slower than for the exponential case and the effect of the shot persists for longer time in the data.
\item We say that $S$ has \emph{random decay} if the decay parameter is random. This is an interesting extension of the class of Ornstein-Uhlenbeck processes. For example,  let $d=2$ and
$$G(t,(u,v)) = u \exp(-v t).$$
Clearly, jump height and decay size can be dependent, see also \cite{SchmidtStute07}. 
\end{enumeratei}
\end{example} 

Shot-noise processes offer a parsimonious and flexible framework as we illustrate in the following example. In general, shot-noise processes are not necessarily semimartingales: indeed, this is the case if $t\mapsto G(t,x)$ is of infinite variation for all $x$ (or for at least some $x$).

The following result, which is well-known for standard shot-noise processes, gives the conditional characteristic function of $S$. This is a key result to the following applications to credit risk. We give a proof using martingale techniques which is suitable for our setup.

\begin{proposition}\label{lem:SN}
Assume that $S$ is a  shot-noise process, $\nu(dt,dx)$ does not depend on $\omega$, and
 $  \nu([0,T],\R^d) < \infty$. Then, for any $0 \le t \le T$ and $\theta \in \R$, 
\begin{align}\label{eq:fourierSN}
  \E\Big[ e^{i \theta S_T}|\cF_t \Big]  
    &= e^{ i \theta \int_0^t \int_{\R^d} G(T-s,x) \mu(ds,dx)} \cdot
       \exp\bigg( \int_t^T \int_{\R^d}\Big( e^{i \theta G(T-s,x)}-1 \Big) \nu(ds,dx) \bigg).  
\end{align}
\end{proposition}

\begin{proof}
For fixed $T$ and $\theta$, we define 
$$ X_t := \exp\bigg( i \theta \int_0^t \int_{\R^d} G(T-u,x) \mu(du,dx) \bigg),
\quad 0 \le t \le T. $$
By It\^o's formula we obtain that
\begin{align*}
X_t = 1 + \int_0^t X_{s-} \Big( e^{i \theta G(T-s,x)}-1 \Big) \mu(ds,dx), \quad 0 \le t \le T.
\end{align*}
We set $\varphi(t):= \E[X_t]$ for $t \in [0,T]$. Then
\begin{align*}
  \varphi(T)  &= \E\Big[ e^{i \theta S_T}\Big] \\
  &= 1+ \E\bigg[ \int_0^T X_{t-} \int_{\R^d} \Big( e^{i \theta G(T-t,x)}-1 \Big) \nu(dt,dx) + M_T \bigg] \\
  &= 1 + \int_0^T \varphi(t-)  F(dt), \quad 0 \le t \le T, 
\end{align*}
where  $M$ is a martingale and $F(t) = \int_0^t \int_{\R^d} \Big( e^{i \theta G(T-t,x)}-1 \Big) \nu(dt,dx), \ 0 \le t \le T $ is an increasing function with associated measure $F(dt)$. The unique solution of this equation is given by 
\begin{align*}
\varphi(T) &= \exp(F(T)) \\
  &= \exp\bigg( \int_0^T \int_{\R^d} \Big( e^{i \theta G(T-t,x)}-1 \Big) \nu(dt,dx)  \bigg).
\end{align*}
Finally, we observe that for $0 \le t \le T$, 
\begin{align*} 
  \E\Big[ e^{i \theta S_T}|\cF_t \Big] 
    &= e^{ i \theta \int_0^t \int_{\R^d} G(T-s,x) \mu(ds,dx)} \cdot
      \E\bigg[ \exp\bigg( i \theta \int_t^T \int_{\R^d} G(T-s,x) \mu(ds,dx) \bigg)|\cF_t\bigg].  
\end{align*}
As $\mu$ has independent increments by assumption, the conditional expectation is in fact an ordinary expectation which can be computed as above and we obtain the desired result.
 \end{proof}
The first part of \eqref{eq:fourierSN} corresponds to the noise of already occurred shots (at time $t$). The second part denotes the expectation of future jumps in $S$. By the application of iterated conditional expectations, Proposition \ref{lem:SN} also allows to compute the finite-dimensional distributions of $S$.

\begin{example}[The standard shot-noise process]\label{rem:standardSN}
If $Z'$ is a compound Poisson process, then $ \nu(ds,dx)=\lambda F_U(dx)ds$ where $\lambda$ is the arrival rate of the jumps, being i.i.d.\ with distribution $F_U$. The classical proof of the above results uses that the jump times of a Poisson process have the same distribution as order statistics of uniformly distributed random variables, see p. 502~in \cite{RolskiSchmidliSchmidtTeugels99}. 
In this case the proof simplifies to 
\begin{align*}
  \E\Big[ e^{i \theta S_T} \Big] &= \E\Big[\sum_{n \ge 1}\ind{T_n \le T, T_{n+1} > T}e^{i \theta \sum_{j=1}^n G(t-T_i,U_i)} \Big] \\
  &= e^{-\lambda T}\frac{(\lambda T)^n}{n!} \prod_{j=1}^n \frac{1}{T} \int_0^T \int_{R^d} e^{i \theta G(t-s,u)} F_U(du)ds \\
  &= \exp\bigg( - \lambda T + \lambda  \int_0^T \int_{R^d} e^{i \theta G(t-s,u)} F_U(du)ds \bigg) \\
  &= \exp\bigg(  \int_0^T \int_{R^d} (e^{i \theta G(t-s,u)}-1) \lambda F_U(du)ds \bigg).
\end{align*}
A conditional version is obtained in an analogous manner.
\end{example}

\begin{remark}[On the general case] What can be said when $\nu$ is not deterministic? In fact, for the proof we need to compute
\begin{align}\label{temp692}
 \E\bigg[ \exp\bigg( i \theta \int_t^T \int_{\R^d} G(T-u,x) \mu(du,dx) \bigg) | \cF_t \bigg]. 
 \end{align}
For this we need to obtain the \emph{exponential compensator} of $\mu$ given $\cF_t$, i.e.\ the $\cF_t$-measurable random measure $\gamma^t$, such that
\begin{align*}
\eqref{temp692} = \exp\bigg( i \theta \int_t^T \int_{\R^d} G(T-u,x) \gamma^t(du,dx) \bigg).
\end{align*}
Exponential compensators for semimartingales were introduced in \cite{KallsenShiryaev2002} and play an important r\^ole in interest rate theory, compare \cite{CuchieroFontanaGnoatto2016}. We will show later that for affine shot-noise processes we will be able to compute the exponential compensator efficiently, see Example \ref{ex:seaffine} where we study a self-exciting shot-noise process.
\end{remark}

The following result, taken from \cite{Schmidt2014:CAT}, gives sufficient conditions which yield that $S$ is a semimartingale.
\begin{lemma}\label{lem:semimart}
Fix $T>0$ and assume that $G(t,x) = G(0,x) + \int_0^t g(s,x) ds$  for all $0 \le t \le T$ and all $x \in \R^d$. If
\begin{align}\label{Fubinicondition}
   \int_0^T \int_{\R^d} (g(s,x))^2 \nu(ds,dx)< \infty,
\end{align}
$\P$-a.s., then $(S_t)_{0 \le t \le T}$ is a semimartingale.
\end{lemma}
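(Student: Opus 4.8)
The essential difficulty is that in the representation $S_t=\int_0^t\int_{\R^d}G(t-s,x)\,\mu(ds,dx)$ the kernel $G(t-s,x)$ depends on the running time $t$, so $S$ is \emph{not} a stochastic integral of a fixed integrand against $\mu$, and the semimartingale property is not immediate (indeed, for a general $G$ it can fail). The plan is to use the assumed absolute continuity of $t\mapsto G(t,x)$ to transfer this time dependence into an absolutely continuous drift, by means of a stochastic Fubini argument. Concretely, I would insert $G(t-s,x)=G(0,x)+\int_s^t g(r-s,x)\,dr$ (the substitution $u=r-s$ applied to $G(t-s,x)=G(0,x)+\int_0^{t-s}g(u,x)\,du$) into the definition of $S_t$.

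Splitting the resulting sum gives two terms. The first, $\int_0^t\int_{\R^d}G(0,x)\,\mu(ds,dx)=\sum_{T_i\le t}G(0,U_i)$, is a piecewise-constant pure-jump process; since the $T_i$ do not accumulate in finite time (equivalently, the standing absolute-convergence assumption holds), it has finitely many jumps on $[0,T]$, is therefore of finite variation, and hence is a semimartingale — and it carries exactly the jumps of $S$. For the second term I would interchange the $dr$-integral with the $\mu$-integral over the region $\{0\le s\le r\le t\}$ to obtain $\int_0^t A_r\,dr$, where $A_r:=\int_0^r\int_{\R^d}g(r-s,x)\,\mu(ds,dx)$. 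As a function of $t$ this is absolutely continuous, hence of finite variation, hence a semimartingale — provided $r\mapsto A_r$ is well defined and $\int_0^T|A_r|\,dr<\infty$ $\P$-a.s. Adding the two terms then yields that $(S_t)_{0\le t\le T}$ is a semimartingale on $[0,T]$.

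The main obstacle is licensing this interchange and establishing the integrability of $A$, which is exactly where the hypothesis \eqref{Fubinicondition} enters. I would decompose
\[
A_r=\int_0^r\int_{\R^d}g(r-s,x)\,(\mu-\nu)(ds,dx)+\int_0^r\int_{\R^d}g(r-s,x)\,\nu(ds,dx)=:A_r^M+A_r^\nu .
\]
For the compensated part, condition \eqref{Fubinicondition} is precisely the requirement that $g$ lies in the domain of integration with respect to $\mu-\nu$ (the It\^o-isometry / predictable-$L^2$ condition of Jacod--Shiryaev), so $A_r^M$ is well defined for a.e.\ $r$; a stochastic Fubini theorem then justifies the interchange and yields that $r\mapsto A_r^M$ is jointly measurable with $\int_0^T|A_r^M|\,dr<\infty$ a.s. Here one localizes to reduce the a.s.\ bound \eqref{Fubinicondition} to an $L^2$ bound, controls $\E[(A_r^M)^2]$ by $\E\int_0^T\int_{\R^d}g^2\,\nu$ via the isometry, and integrates in $r$ over the \emph{finite} interval $[0,T]$ using Cauchy--Schwarz. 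The compensator part $A_r^\nu$ is handled by ordinary (pathwise) Fubini together with the same integrability. Once the Fubini interchange is justified, the decomposition above delivers the semimartingale property; the delicate point throughout is the stochastic Fubini step for $A^M$, since the integrand $g(r-s,x)$ depends on the outer variable $r$, so $A^M$ is not itself a martingale in $r$ and the interchange cannot be reduced to a statement about a single stochastic integral.
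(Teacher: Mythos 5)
Your argument is essentially the paper's own proof: the paper likewise inserts $G(t-s,x)=G(0,x)+\int_s^t g(u-s,x)\,du$, invokes the stochastic Fubini theorem (Theorem IV.65 in Protter) under condition \eqref{Fubinicondition} to turn the time-dependent part into an absolutely continuous (hence finite-variation) term $\int_0^t\bigl(\int_0^u\int_{\R^d}g(u-s,x)\,\mu(ds,dx)\bigr)du$, and treats the $G(0,x)$-term separately as the jump/martingale part. Your extra detail on justifying the interchange via compensation and the It\^o isometry simply unpacks the cited Fubini theorem, so the proposal is correct and follows the same route.
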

\noindent For the convenience of the reader we repeat the proof of this result.
\begin{proof}
Under condition \eqref{Fubinicondition}, we can apply the stochastic Fubini theorem in the general version given in Theorem IV.65 in \cite{Protter}. Observe that
\begin{align}
 S_t  &= \int_0^t \int_{\R^d} \int_s^t g (u-s,x) du  \, \mu(ds,dx) + \int_0^t \int_{\R^d} G(0,x) \mu(ds,dx) \notag\\
     &= \int_0^t \int_0^s \int_{\R^d}  g (u-s,x) \mu(ds,dx) \, du + \int_0^t \int_{\R^d} G(0,x) \nu(ds,dx) + M_t,  \label{semimartdecomp}
\end{align} 
with a local martingale $M$. This is the semimartingale represenation of $S$ and hence $S$ is a semimartingale.
 \end{proof}
It is possible to generalize this result to the case where $G(t,x) = G(0,x) + \int_0^t g(s,x) dA(s)$ with a process $A$ of finite variation. Here, however, we do not make use of such a level of generality -- see \cite{JacodShiryaev}, Proposition II.2.9.~for details on the choice of $A$. 

Moreover, a characterization of semimartingales when starting from the more general formulation in \eqref{SN:1} is possible using similar
methodologies, see \cite{SchmidtStute07} for an example.

\begin{remark}Having a driver $Z$ which has independent and stationary increments may be a limitation in some applications.
It is straightforward to allow for more general driving processes. 
For example, consider a filtration $\bbG=(\cG_t)_{t \ge 0}$ satisfying the usual 
conditions. Let $\nu$ be a $\cG_0$-measurable random measure
on $[0,T]\times \R^d$ such that for any open set $A$ in $\R^k$,
\begin{align*}
  \P\bigg( \sum_{T_i \in (s,t]} \ind{ X_i \in A }=k \,\Big|\, \cG_s \bigg) = e^{-\nu((s,t]\times A)} \, \frac{(\nu((s,t]\times A))^k}{k!}.
  \end{align*}
  If $X_1,X_2,\dots$ are i.i.d.\ and independent of $\bbG$, then $Z$ is a $\bbG$-\emph{doubly stochastic marked Poisson process}, Intuitively, given  $\cG$,
  $Z$ is a (time-inhomogeneous) Poisson process with $\cG_0$-measurable jumps. This is a so-called
  initial enlargement of filtration, compare \cite{BieleckiJeanblancRutkowski2000} or \cite{JeanblancRutkowski2000} for an introduction into   this field.
Doubly-stochastic marked Poisson processes in credit risk modelling have also been considered in \cite{GasparSlinko08}, however not in a shot-noise setting.
\end{remark}

\begin{example}[An affine self-exciting shot-noise process] \label{ex:seaffine}
Inspired by \cite{ErraisGieseckeGoldberg2010} we consider the two-dimensional affine process $X=(N,\lambda)^\top$ where
\begin{align} \label{affsn}
  d\lambda_t & = \kappa(\theta - \lambda_t) dt + dN_t 
\end{align}
and $N$ is a counting process with intensity $\lambda$. In this case the compensator of $N$ is given by $\nu_N(dt,dx) = \lambda_t \delta_{1}(dx) dt$; $\delta_a$ denoting the Dirac measure at the point $a$. Following \cite{KellerRessel2013}, the process $X$ is a two-dimensional affine process with state space $\N_0 \times \R_{\ge 0}$ when $\theta \ge 0$. Hence its conditional distribution is given in exponential affine form, i.e.
$$ \E[e^{iu X_T}|\cF_t] = \exp\Big(\phi(T-t,u) + \langle \psi(T-t,u), X_t \rangle\Big), $$
for all $u \in \R^2$ and the coefficients $\phi$ and $\psi$ solve the generalized Riccati equations
\begin{align*}
\partial_t \phi(t,u) &= \kappa \theta \psi_2(t,u) \\
\partial_t \psi_1(t,u) &= 0 \\
\partial_t \psi_2(t,u) &= -\kappa \psi_2(t,u) + \exp( \psi_2(t,u) + \psi_1(t,u) )-1
\end{align*}
with the boundary conditions $\phi(0,u)=0$ and $\psi(0,u)=u$ (see Proposition 3.4 in \cite{KellerRessel2013}). Hence $\psi_1(t,u)=u_1$.
Observe that $\lambda$ is a shot-noise process (when $\lambda_0=\theta=0$): the solution of \eqref{affsn} is
\begin{align*}
   \lambda_t = e^{- \kappa t} \lambda_0 + \theta(1-e^{-\kappa t}) + \sum_{i=1}^{N_t} e^{-\kappa (t-T_i)}, 
\end{align*}
where we denoted by $T_1,T_2,\dots$ the jump times of $N$. Hence, for $\lambda_0=\theta=0$, $\lambda$ is an (affine and Markovian) shot-noise process. 
\end{example}

\subsection{The Markov property}

Proposition \ref{lem:SN} allows us to draw a connection to \emph{affine} processes. This processes have been
 studied intensively in the literature because of their high tractability. If $G$ has exponential dependence on time, more precisely $G(t,x)=xe^{-bt}$, then $G(t+s,x) = G(t,x)e^{-bs}$ which is the key to Markovianity. Then,
\begin{align*}
e^{ i \theta \int_0^t \int_{\R^d} G(T-s,x) \mu(ds,dx)} &= e^{ i \theta \int_0^t \int_{\R^d} e^{-b(T-t)}G(t-s,x) \mu(ds,dx)}  \\
&= e^{ i \theta e^{-b(T-t)} \int_0^t \int_{\R^d} G(t-s,x) \mu(ds,dx)} = e^{i\theta e^{-b(T-t)} S_t},
\end{align*}
such that
\begin{align*}
\E\Big[e^{i\theta S_T   } \big| \cF_t  \Big]  
& = \exp\bigg( \int_t^T \int_{\R^d}\Big( e^{i \theta G(T-s,x)}-1 \Big) \nu(ds,dx) \bigg) \cdot e^{i\theta e^{-b(T-t)} S_t} \\[2mm]
&=: \exp(\phi(t,T,\theta) + \psi(t,T,\theta) S_t),
\end{align*}
which is the exponential-affine structure classifying affine processes. While for affine processes $\phi$ and $\psi$
are determined via solutions of generalized Riccati equations, in the shot-noise case we obtain a simpler 
integral-representation. 
Similar in spirit, we obtain that if the Markov property is satisfied, many expectations simplify considerably, as the following result illustrates.
\begin{corollary}
Consider an inhomogeneous shot-noise process $S$ with $G(t,x)=xe^{-bt}$ and $E|X_i|<\infty, i \ge 1$. Then, for $T>t$,
\begin{align*}
 \E[ S_T | \cF_t] &= e^{-b(T-t)} S_t +\E\bigg[ \sum_{T_i \in (t,T]} U_i  e^{-b(T-T_i)}\bigg].
\end{align*}
\end{corollary}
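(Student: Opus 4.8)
The plan is to exploit the semigroup property of the exponential kernel to split $S_T$ into an $\cF_t$-measurable \emph{past} contribution and a \emph{future} contribution whose conditional expectation collapses to an ordinary expectation, mirroring exactly the final step in the proof of Proposition~\ref{lem:SN}.

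First I would use the representation $S_T = \sum_{i \ge 1}\ind{T_i \le T}\,U_i e^{-b(T-T_i)}$ and split the sum at $t$:
\begin{align*}
  S_T = \sum_{T_i \le t} U_i e^{-b(T-T_i)} + \sum_{t < T_i \le T} U_i e^{-b(T-T_i)}.
\end{align*}
For the first sum I factor $e^{-b(T-T_i)} = e^{-b(T-t)}\,e^{-b(t-T_i)}$, which is just the instance $G(T-s,x) = e^{-b(T-t)} G(t-s,x)$ (valid for $s \le t$) already used to establish the affine structure above. This identifies the first sum with $e^{-b(T-t)} S_t$, an $\cF_t$-measurable quantity that pulls out of the conditional expectation, giving
\begin{align*}
  \E[S_T \mid \cF_t] = e^{-b(T-t)} S_t + \E\bigg[ \sum_{t < T_i \le T} U_i e^{-b(T-T_i)} \,\Big|\, \cF_t \bigg].
\end{align*}

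The decisive step is to drop the conditioning in the remaining term. The future contribution $\sum_{t < T_i \le T} U_i e^{-b(T-T_i)}$ is measurable with respect to the increments of the marked point process on $(t,T]$; since $S$ is an \emph{inhomogeneous} shot-noise process, these increments are independent of $\cF_t$, so the conditional expectation equals the unconditional one. This is precisely the argument turning a conditional into an ordinary expectation at the end of the proof of Proposition~\ref{lem:SN}, and applying it here yields the claimed identity.

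The main obstacle is purely the integrability needed to justify interchanging expectation with the (possibly infinite) sum, i.e.\ a Fubini step. On $(t,T]$ one has $T - T_i \in [0, T-t)$, so the weights are uniformly bounded, $e^{-b(T-T_i)} \le e^{|b|(T-t)}$, and the whole question reduces to the finiteness of $\int_{(t,T]\times\R^d} |x|\,\nu(ds,dx)$, which follows from the integrability of the marks $\E|U_i| < \infty$ together with the finiteness of the mean measure $\nu([0,T],\R^d) < \infty$. As a cross-check, the same result drops out by differentiating the characteristic function \eqref{eq:fourierSN} in $\theta$ at $\theta = 0$: the first factor contributes $e^{-b(T-t)} S_t$ and the logarithmic derivative of the second contributes $\int_t^T \int_{\R^d} G(T-s,x)\,\nu(ds,dx)$, which equals $\E\big[\sum_{t < T_i \le T} U_i e^{-b(T-T_i)}\big]$ by the defining property of the compensator.
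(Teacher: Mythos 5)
Your proof is correct and follows essentially the argument the paper intends: the paper states the corollary without a written proof, but the split of $S_T$ at time $t$ via $G(T-s,x)=e^{-b(T-t)}G(t-s,x)$ and the collapse of the conditional expectation of the future shots to an ordinary expectation by independence of increments is exactly the decomposition used in the surrounding text (the affine-structure computation and Equation \eqref{3.9} in the proof of Proposition \ref{prop:Markovian}). Your added integrability discussion and the cross-check by differentiating \eqref{eq:fourierSN} at $\theta=0$ are sound supplements rather than deviations.
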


We now focus our attention on the important question of Markovianity of shot-noise processes. Typically, shot-noise processes are not Markovian. Still, from a computational point of 
view Markovianity could be preferable. Proposition \ref{prop:Markovian} provides a clear classification when the decay function  satisfies $G(t,x)=xH(t)$: 
then Markovianity is equivalent to an \emph{exponential decay}. In more general cases one typically looses Markovianity.

\begin{proposition}\label{prop:Markovian}
Consider a standard shot-noise process $S$ where $G(t,x) =x H(t)$ with a c\`adl\`ag function $H:\R_{\ge 0}\to \R$. Assume that there exists an $\epsilon>0$ such that  $(0,\epsilon] \subset H(\R^+)$. 
Then $S$ is Markovian, if and only if there exist $a,b \in \R$ such that
$$ H(t)=ae^{-b t}. $$
\end{proposition}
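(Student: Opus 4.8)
The plan is to use the conditional characteristic function formula from Proposition \ref{lem:SN} as the characterization of Markovianity and extract a functional equation on $H$. Since $S$ is a standard shot-noise process, the compensator takes the form $\nu(ds,dx)=\lambda\,F_U(dx)\,ds$ and is deterministic, so Proposition \ref{lem:SN} applies. Writing $G(t,x)=xH(t)$, the formula gives
\begin{align}\label{eq:ccf}
  \E\big[e^{i\theta S_T}\,|\,\cF_t\big]
  = \exp\Big(i\theta\!\int_0^t\!\int_{\R^d}\! xH(T-s)\,\mu(ds,dx)\Big)\cdot
    \exp\Big(\lambda\!\int_t^T\!\int_{\R^d}\!\big(e^{i\theta xH(T-s)}-1\big)F_U(dx)\,ds\Big).
\end{align}
The second factor is deterministic, so all the $\cF_t$-dependence sits in the first factor, namely in the random variable $A_t(T):=\int_0^t\int_{\R^d} xH(T-s)\,\mu(ds,dx)=\sum_{T_i\le t}U_i H(T-T_i)$. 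The easy direction (sufficiency) is the computation already carried out in the paragraph preceding the proposition: if $H(t)=ae^{-bt}$ then $H(T-s)=e^{-b(T-t)}H(t-s)$, so $A_t(T)=e^{-b(T-t)}S_t$ and the conditional law of $S_T$ depends on $\cF_t$ only through $S_t$, which is exactly the Markov property.

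For the hard direction (necessity) the idea is that Markovianity forces $A_t(T)$ to be a deterministic (time- and maturity-dependent) function of $S_t=A_t(t)$ on the support of the law of $S_t$. First I would fix a time $t$ and condition on there having been exactly one jump in $[0,t]$, say at time $r$ with mark $u$, and no further structure; more robustly, I would compare two scenarios that produce the same value of $S_t$ but, if $H$ is not exponential, different values of $A_t(T)$ — and show this contradicts the requirement that $\E[e^{i\theta S_T}\mid\cF_t]$ be $\sigma(S_t)$-measurable. Concretely, Markovianity means the first factor in \eqref{eq:ccf} must be $\sigma(S_t)$-measurable for every $\theta$, hence $A_t(T)$ itself must be a measurable function of $S_t$. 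The goal is thus the functional statement: for each fixed $T\ge t$ there is a function $c_{t,T}$ with
$$ \sum_{T_i\le t} U_i\,H(T-T_i) \;=\; c_{t,T}\Big(\sum_{T_i\le t} U_i\,H(t-T_i)\Big)\qquad\text{a.s.} $$
Varying the marks and jump positions (the hypothesis $(0,\epsilon]\subset H(\R^+)$ guarantees $H$ takes a genuine interval of values, so we can realize a rich enough family of $(U_i,T_i)$ configurations) pins down $c_{t,T}$ to be linear, $c_{t,T}(y)=\rho(t,T)\,y$, and then forces the multiplicative cocycle relation
\begin{align}\label{eq:cauchy}
  H(T-r)=\rho(t,T)\,H(t-r)\qquad\text{for all admissible }r,t,T.
\end{align}

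From \eqref{eq:cauchy} the conclusion is a standard Cauchy-type argument: the relation says $H(t+s)/H(t)$ depends only on $s$, i.e. $H(t+s)=H(t)\,\eta(s)$ where $\eta$ is multiplicative, $\eta(s_1+s_2)=\eta(s_1)\eta(s_2)$; combined with c\`adl\`ag-ness (hence local boundedness, or at least measurability on the set where $H\ne0$) this yields $\eta(s)=e^{-bs}$ and $H(t)=ae^{-bt}$. The role of the assumption $(0,\epsilon]\subset H(\R^+)$ is to ensure $H$ is not identically zero and that we have enough points where $H$ is nonzero to run the Cauchy argument without dividing by zero, and to rule out degenerate/pathological solutions of the multiplicative equation.

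\textbf{The main obstacle} I expect is the necessity direction, specifically rigorously upgrading ``$A_t(T)$ is a measurable function of $S_t$'' to the pointwise linear/multiplicative identity \eqref{eq:cauchy}. The subtlety is that $S_t$ and $A_t(T)$ are built from the \emph{same} random configuration $(T_i,U_i)$, so one must carefully exhibit, within the support of the marked point process, two configurations giving identical $S_t$ but distinct $A_t(T)$ unless $H$ is exponential — this is where the continuum-of-values hypothesis on $H$ does the real work, and where care is needed because the jump times $T_i$ are themselves random and one only controls their distribution (e.g. via the order-statistics property of the Poisson times, as in Example \ref{rem:standardSN}). Handling the measurability and the null-set exceptions cleanly, rather than the final Cauchy step, is the delicate part.
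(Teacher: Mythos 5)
Your proposal is correct and takes essentially the same route as the paper: the paper starts from $\E[S_s\mid\cF_t]$ rather than the characteristic function, but both arguments reduce Markovianity to the statement that $\sum_{T_i\le t}U_i H(s-T_i)$ must be a measurable function of $S_t$, force that function to be linear by conditioning on the number of jumps and exploiting the uniform order-statistics of the jump times together with the range assumption $(0,\epsilon]\subset H(\R^+)$ and c\`adl\`ag regularity, and then solve the resulting cocycle identity via the multiplicative Cauchy equation. The delicate step you single out -- upgrading measurability to linearity -- is carried out in the paper exactly by the order-statistics device you anticipate.
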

\begin{proof}
First, consider the case where  the shot-noise process $S$ is Markovian (with respect to the filtration $\bbF$).
For $s>t$, we have that
\begin{align}\label{3.9}
\E[ S_s | \cF_t] &=
\sum_{T_i \le t} U_i H(s-T_i) + \E\bigg[\sum_{T_i \in (t,s]}   U_i H(s-T_i) \,\Big|\, \cF_t\bigg].
\end{align}
 As $Z$ has independent and stationary increments, we obtain that
\begin{align*}
 & \E\bigg[\sum_{T_i \in (t,s]} U_i  H(s-T_i) \,\big | \, \cF_t \bigg]
= \E\bigg[ \sum_{T_i \in (0,s-t]} U_i H(s-t-T_i) \bigg] 
\end{align*}
is a deterministic function (and hence does not depend on $\omega$).
From Markovianity it follows that
$ \E[ S_s | \cF_t] = \E[ S_s | S_t]=:\tilde F(t,s,S_t)$ for all $0 \le s \le t$, where $\tilde F$ is a measurable function. Hence, we obtain the existence of a measurable function $F:\R^+\times\R^+\times\R$, 
such that 
\begin{eqnarray}\label{eqn:F}
 & \sum_{T_i \le t} U_i H(s-T_i) = F(t,s,S_t)
\end{eqnarray}
a.s.~for all $0 \le t \le s$. If $\P(U_1=0)=1$ the claim holds with $a=0$. Otherwise choose
non-zero $u$ such that \eqref{eqn:F} holds with  $U_1,U_2,\dots$ replaced by $u$. W.l.o.g.~consider $u=1$.
In particular, $F(t,s,H(t-T_1))=H(s-T_1)$ holds a.s. As in Remark \ref{rem:standardSN} we condition on $N_t=n$ and obtain that
\begin{align}\label{3.10}
F(t,s,\sum_{i=1}^n  H(t-\eta_i) ) = \sum_{i=1}^n  H(s-\eta_i) = \sum_{i=1}^n F(t,s, H(t-\eta_i))\end{align}
with probability one, where $\eta_i$ are i.i.d.~$U[0,s]$. 
As $(0,\epsilon]\subset H(\R^+)$, 
\begin{align}\label{eq:linearity}
F(t,s,x_1+\dots+x_n) = \sum_{i=1}^n F(t,s,x_i)
\end{align}
for all $x_1,x_2,\dots \R^+$ and $n \ge 1$ 
except for a null-set with respect to the Lebesgue measure.
Note with $h$ being c\`ad so is $F$ in the third coordinate and we obtain that \eqref{eq:linearity} holds for all 
$x_1,x_2,\dots \in \R^+$. 
Hence, $F$ is additive such that 
$F(t,s,x) = F(t,s,1) x$ (see  Theorem 5.2.1 and Theorem 9.4.3 in \cite{Kuczma}) for all $x \in \R^+$.   

Next,  we  exploit 
$$ F(t,s,1) H(t-u) = H(s-u) $$
for all $0 \le u \le t \le s$ to infer properties of $h$. First, $u=0$ gives $F(t,s,1)H(t)=H(s)$ and so $H(0)\not = 0$ because otherwise $H(s)$ would
vanish for all $s \ge 0$ which contradicts $(0,\epsilon]\subset H(\R^+)$. Next, $u=t$ gives $H(s-t) = F(t,s,1) H(0) $ such that
$$ H(s-t) H(t) =  F(t,s,1) H(0) H(t) = H(s) H(0).$$
This in turn yields that $f:= H(t)/H(0)$  satisfies
$$ f(x+y) = f(x) f(y). $$
Then $f$ is additive and measurable and hence continuous. 
The equation is a multiplicative version of Cauchy's equation and hence $f(x) = e^{-bx}$, see Theorem 13.1.4 in \cite{Kuczma} such that
we obtain $H(x)=H(0) e^{-bx}.$

For the converse, note that if $G(t) = ae^{-bt}$, then
$$ \sum_{T_i \le t} U_i G(s-T_i) = G(s-t) \sum_{T_i \le t} U_i G(t-T_i), $$
and hence \eqref{3.9} yields that $S$ is Markovian.
 \end{proof}

\begin{remark} 
 For Markovianity it is necessary that $U_1, U_2, \dots$ are independent \emph{and} identically distributed.
Merely for the sake of the argument, assume that $U_1,U_2 \in \{0,1,2\}$ and $0=U_3=U_4, \dots$. If $t>T_1$ and $S_t=2$ the distribution of
$S_{t+1}$ depends not only on $S_t$ but also on the number of jumps before $t$  and so it is not Markovian.
\end{remark}

\section{The application to financial markets} \label{sec:FM}
Shot-noise processes have been applied to the modelling of stock markets and to the modelling of intensities, which is useful in credit risk and insurance mathematics. Following the works \cite{AltmannSchmidtStute08,SchmidtStute07} and \cite{MorenoSerranoStute2011} we consider the application to the modelling of stocks. The main idea is to extend the Black-Scholes-Merton framework by a shot-noise component.

In this regard, we denote by $X$ the price process of the stock. According to Lemma \ref{lem:semimart}, a shot-noise process where $G$ is absolutely continuous (in time), i.e. $G(t,x) = G(0,x) + \int_0^t g(s,x) ds$, is a semimartingale. In financial markets, semimartingales are inherently linked to absence of arbitrage via the fundamental theorem of asset pricing, see \cite{delbaen-schachermayer-06} for a thorough discussion, such that from now on we focus on shot-noise processes of this form.

In this regard, consider a marked point process $Z=(T_i,X_i)_{i \ge 1}$ with mark space $\R^d$ and a noise function $g:\R_{\ge 0}\times \R^d$, determining the shot-noise component of our model. As additional driver we consider a one-dimensional Brownian motion $W$, which is independent of $Z$, and the volatility parameter $\sigma >0$. As previously, the integer-valued random measure $\mu$ counts the jumps of the marked point process $Z$, see Equation \eqref{def:mu}. Altogether, we assume that
\begin{align}\label{def:X}
X_t = X_0 \exp\Big( \mu t + \sigma W_t- \frac{\sigma^2t}{2} + \int_{0}^t \sum_{T_i \le s}  g(s-T_i,U_i)ds + \sum_{T_i \le t}G(0,U_i) \Big), \ t \ge 0.
\end{align}
To guarantee absence of arbitrage, one has to find an equivalent martingale measure. However, for statistical estimation of the model it is important to have a nice structure of the process under the risk-neutral measure. It turns out that this is not the case for the minimal martingale measure, studied in \cite{SchmidtStute07}. In the following we aim at classifying all martingale measures by a drift condition and give some hints of possible choices of martingale measures suitable for applications.
The first important step will therefore be to classify all equivalent measures.

\subsection{Equivalent measure changes}
In this section we study general  measure changes which apply to different settings of shot-noise processes. We focus on the measure changes for shot-noise processes in this part and leave the independent Brownian motion $W$ aside in our considerations until Section \ref{sec:driftcondition}.
In this regard, we will not be able to work with a general filtration, but have to assume a specific structure, which is done in the following two assumptions.
We consider an initial $\sigma$-field $\cH$. A flexibility of the initial $\sigma$-field can be useful when considering initial enlargements of the filtration, for example when considering doubly-stochastic processes. 
\begin{description}
\item[{\bf (A1)}] $\cF=\cF_{\infty -}$ and $\bbF$ is the smallest filtration for which $\mu$ is
optional and $\cH \subset \cF_0$.
\item[{\bf (A2)}] The measure $\nu$ is absolutely continuous with respect to the Lebesgue-measure, i.e.~there is a kernel, which we
denote again by $\nu(t,dx)$ such that
$$ \nu(dt,dx)=\nu(t,dx)dt. $$
\end{description}
We set
$$ \xi_n:=\inf\Big\{ t\ge 0: \int_0^t\int_{\R^d} (1-\sqrt{Y(s,u)})^2 \nu(s,du) ds \ge n \Big\},  \quad \text{for all }n. $$
A measure $\P'$ is called \emph{absolutely continuous}  with respect to a second measure $\P$, if $\P(A)=0$ implies $\P'(A)=0$ for all $A\in \cF$. In this case we write $\P'\ll \P$. 
An important tool on a filtered probability space is the \emph{density process} $L$. This is the unique martingale $L$, such that for each $t \ge 0$, $L_t$ coincides with the Radon-Nikodym derivative $d\P' |_{\cF_t} / d\P |_{\cF_t}$. Here we denote by 
$\P |_{\cF_t}$ the restriction of $\P$ to $(\Omega,\cF_t)$ (and similarly for $\P'$).

If $\P' \ll \P$ and $\P \ll \P'$ holds, then we call $\P$ and $\P'$ \emph{equivalent} and denote this relation by $\P\sim \P'$. 
\begin{proposition}\label{prop:absoluteMC}
Assume that {\bf (A1)} and {\bf(A2)} hold and $\P'\ll\P$. Then there exists a $\cP\otimes \cB(\R^d)$-measurable non-negative function $Y$
such that the density process $L$ of $\P'$ relative to $\P$ coincides with
\begin{align}\label{eq:Zn}
L^n_t = e^{-\int_0^{t\wedge \xi_n}\int_{\R^d} (Y(s,u)-1) \nu(s,du) ds} \, \prod_{T_i \le t} Y(T_i,U_i)
\end{align}
on $\llbracket 0,\xi_n \rrbracket$ for all $n \ge 1$.
Moreover, $Z$ is a (possibly explosive) marked point process under $\P'$ and its compensator w.r.t.\ $\P'$ is given by
$ Y(t,u) \nu(t,du) dt. $
\end{proposition}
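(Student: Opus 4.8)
The plan is to exploit the special structure imposed by \textbf{(A1)}: the filtration $\bbF$ is (the $\cH$-initial enlargement of) the natural filtration of the random measure $\mu$, and on such a filtration the compensated measure $\mu-\nu$ enjoys the predictable representation property -- every $\P$-local martingale null at zero is a stochastic integral against $\mu-\nu$ (see Chapter III, in particular Theorem III.4.37, of \cite{JacodShiryaev}). First I would record that the density process $L$ is a non-negative $\P$-martingale with $L_0=1$. Applying the representation theorem to $L-1$ yields a $\tilde\cP$-measurable integrand $W$ with
\[
 L_t = 1 + \int_0^t\int_{\R^d} W(s,x)\,(\mu-\nu)(ds,dx).
\]
Since $L\ge 0$, its jumps satisfy $\Delta L_{T_i}=W(T_i,U_i)\ge -L_{T_i-}$, so setting $Y:=1+W/L_-$ on $\{L_->0\}$ and $Y:=1$ elsewhere defines a non-negative $\cP\otimes\cB(\R^d)$-measurable function and rewrites the equation as the linear SDE
\[
 dL_t = L_{t-}\int_{\R^d}\big(Y(t,x)-1\big)\,(\mu-\nu)(dt,dx).
\]

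Next I would solve this SDE as a Dol\'eans-Dade exponential $L=\mathcal{E}(X)$ with $X:=(Y-1)*(\mu-\nu)$. Here \textbf{(A2)} is essential: because $\nu$ is absolutely continuous in time, $X$ has no continuous martingale part ($X^c=0$), its continuous part of finite variation equals $-(Y-1)*\nu$, and its jumps are $\Delta X_{T_i}=Y(T_i,U_i)-1$. The general formula $\mathcal{E}(X)_t=e^{X_t-\frac12[X^c]_t}\prod_{s\le t}(1+\Delta X_s)e^{-\Delta X_s}$ therefore collapses to $e^{-(Y-1)*\nu_t}\prod_{T_i\le t}Y(T_i,U_i)$, which is precisely the stated product representation. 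Cutting at $\xi_n$ -- which bounds the Hellinger-type integral $\int(1-\sqrt{Y})^2\,\nu(s,du)\,ds$ -- guarantees that $L^n$ is a genuine, uniformly integrable martingale and that the product converges, so that $L$ coincides with $L^n$ on $\llbracket 0,\xi_n\rrbracket$ for every $n$.

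Finally, for the compensator I would appeal to Girsanov's theorem for random measures (Theorem III.3.17 in \cite{JacodShiryaev}): a density process of the above multiplicative form transforms the $\P$-compensator $\nu(t,du)\,dt$ of $\mu$ into $Y(t,u)\,\nu(t,du)\,dt$ under $\P'$. Alternatively one verifies directly, by integration by parts together with the SDE for $L$, that for every non-negative $\tilde\cP$-measurable $W$ the process $L\cdot\big(W*\mu - W*(Y\nu)\big)$ is a $\P$-local martingale, which is equivalent to $Y\nu$ being the $\P'$-compensator of $\mu$. The qualification \emph{possibly explosive} reflects that under $\P'$ the intensity $Y\nu$ may charge infinite mass in finite time, so the jump times $T_i$ can accumulate.

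The step I expect to be the main obstacle is twofold. The first delicate point is justifying the predictable representation property in the generality of \textbf{(A1)} -- this is exactly where the hypothesis that $\bbF$ is the \emph{smallest} filtration making $\mu$ optional (with $\cH\subset\cF_0$) is indispensable, as representation fails on larger filtrations. The second is the careful treatment of the set $\{L_-=0\}$ and the localization by $\xi_n$, so that $Y$ is well defined and non-negative, the stochastic exponential is controlled up to the possible explosion (death) time, and the Dol\'eans-Dade formula genuinely collapses to the product form; the latter collapse relies crucially on the absolute continuity assumption \textbf{(A2)} that removes any continuous component from $X$.
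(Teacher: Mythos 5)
Your argument is correct, but it follows a genuinely different route from the paper. The paper's proof is a one-step application of the general theory of absolutely continuous changes of measure for random measures: it invokes Theorem III.5.43 of \cite{JacodShiryaev}, noting only that \textbf{(A2)} forces $\hat Y_t=\int_{\R^d}Y(t,u)\nu(\{t\},du)=0$ (so the time $\sigma$ of that theorem is infinite), identifies the Hellinger-type process $H_t=\int_0^t\int_{\R^d}(1-\sqrt{Y(s,u)})^2\nu(s,du)\,ds$ whose level sets define the $\xi_n$, and reads off formula \eqref{eq:Zn} from the explicit expression III.5.21. You instead reconstruct the content of that black box: predictable representation on the internal filtration of $\mu$ (this is where \textbf{(A1)} enters; the cleanest reference is Jacod's representation theorem for multivariate point processes, Theorem III.4.34/III.4.37 in \cite{JacodShiryaev} or Ch.~III of \cite{Bremaud1981}), the substitution $Y=1+W/L_-$, the Dol\'eans-Dade formula for $\cE((Y-1)*(\mu-\nu))$ collapsing to the product form because $X^c=0$ and $\nu(\{t\},dx)=0$, and finally Girsanov for random measures (Theorem III.3.17) for the $\P'$-compensator. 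Your route is more self-contained and makes visible exactly where each hypothesis is used; the paper's route is shorter and automatically disposes of the bookkeeping you must handle by hand -- the definition of $Y$ on $\{L_-=0\}$, its non-negativity and uniqueness only up to an $M^\P_\mu$-null set, and the finiteness on $\llbracket 0,\xi_n\rrbracket$ of the separate integrals $(Y-1)*\nu$ and $(Y-1)*\mu$ needed for the exponent and the product to make sense individually (the last point follows from $(\sqrt{\int Y\nu}-\sqrt{\int\nu})^2\le H\le n$ together with $\nu([0,t]\times\R^d)<\infty$, and deserves a line in your write-up). Two minor imprecisions: the stopped process $L^{\xi_n}$ is automatically a uniformly integrable martingale because $L$ is closed by $d\P'/d\P$, so the localization is not needed for martingality but only for the well-posedness of the formula; and both your version and the statement implicitly take $L_0=1$, i.e.\ $\P'=\P$ on $\cF_0\supset\cH$.
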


\begin{proof}
We apply Theorem III.5.43 in \cite{JacodShiryaev} and refer to their notation for this proof.  Note that because the compensator of $Z$ is absolutely continuous,
$$ \hat Y_t=\int_{\R^d} Y(t,u) \nu(\{t\},du) = 0 $$  
(compare Equation III.5.2) and therefore $\sigma$ given in Equation III.5.6 satisfies $\sigma=\infty$. Furthermore, the process $H$ given in  Equation III.5.7 satisfies
$$ H_t = \int_0^t\int_{\R^d}(1-\sqrt{Y(s,u)})^2 \nu(s,du) ds. $$
In general, $H$ could  explode, so that following  III.5.9 we consider  $\xi_n=\inf\{t\ge 0: H_t \ge n\}$ and define $N^{\xi_n}$ by
$$ N_t^{\xi_n} := \int_0^{t \wedge \xi_n} (Y-1) \, (\mu(ds,du) - \nu(s,du)ds). $$
Proposition III.5.10  yields that there exists a unique $N$ which coincides with $N^{\xi^n}$ at least on all random intervals $\llbracket 0,\xi_n\rrbracket$, $n\ge 1$.
Theorem III.5.43  yields that under our assumptions the density $L$ coincides with $L^n$ as  inspection of formula III.5.21 shows. This gives our claim.
 \end{proof}

The main tool is the following result which considers the stronger case of equivalent measures.

\begin{theorem} Assume that {\bf (A1)} and {\bf (A2)} hold and $\P\sim\P'$. Then \label{thm:density1}
\begin{align} \label{cond1}
\int_0^t\int_{\R^d} Y(s,u)\nu(s,du)ds < \infty
\end{align}
$\P'$-almost surely for all $t>0$ and the density $L$ is given by
\begin{align}\label{eq:Z}
L_t = e^{-\int_0^{t}\int_{\R^d} (Y(s,u)-1) \nu(s,du) ds} \, \prod_{T_n \le t} Y(T_n,U_n), \quad t \ge 0.
\end{align}
\end{theorem}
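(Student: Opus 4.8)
The plan is to deduce the global statement from the absolutely continuous case already established in Proposition~\ref{prop:absoluteMC}. Since $\P\sim\P'$ in particular gives $\P'\ll\P$, that proposition provides a nonnegative $\cP\otimes\cB(\R^d)$-measurable function $Y$ whose $\P'$-compensator for $Z$ is $Y(t,u)\nu(t,du)\,dt$, and shows that the density process $L$ coincides with the explicit expression $L^n$ of \eqref{eq:Zn} on each interval $\llbracket 0,\xi_n\rrbracket$, where $\xi_n=\inf\{t: H_t\ge n\}$ with $H_t:=\int_0^t\int_{\R^d}(1-\sqrt{Y(s,u)})^2\,\nu(s,du)\,ds$. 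The only discrepancy between \eqref{eq:Zn} and the desired formula \eqref{eq:Z} is the localization at $\xi_n$, so the whole proof reduces to verifying that $\xi_n\uparrow\infty$ $\P'$-almost surely, equivalently that $H_t<\infty$ for every $t$.

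First I would prove the integrability \eqref{cond1}. By construction $\int_0^t\int_{\R^d}Y\,\nu\,ds$ is the value at $t$ of the $\P'$-compensator of the counting process $N_t:=\mu_t(\R^d)$. Under $\P$ the marked point process $Z$ is non-explosive, so $\{N_t=\infty\}$ is $\P$-null for every $t$; as $\P'\ll\P$ the same event is $\P'$-null, i.e.\ $N$ is non-explosive under $\P'$ as well. The compensator of a non-explosive counting process is $\P'$-a.s.\ finite on finite intervals (from $\E_{\P'}[A_{t\wedge T_n}]=\E_{\P'}[N_{t\wedge T_n}]\le n$ and $T_n\uparrow\infty$, see \cite{Bremaud1981}), which is exactly \eqref{cond1}.

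With \eqref{cond1} available, finiteness of $H$ follows from the elementary bound $(1-\sqrt{y})^2=1-2\sqrt{y}+y\le 1+y$ for $y\ge 0$, giving
\[ H_t \le \nu([0,t]\times\R^d)+\int_0^t\int_{\R^d}Y(s,u)\,\nu(s,du)\,ds. \]
The first term is the $\P$-compensator of $N$ at $t$, again $\P$-a.s.\ finite by non-explosion under $\P$ and hence $\P'$-a.s.\ finite, and the second is finite by \eqref{cond1}; thus $H_t<\infty$ for all $t$ and $\xi_n\uparrow\infty$ $\P'$-a.s. Fixing $t$ and choosing $n$ with $\xi_n>t$, the identity $L_t=L^n_t$ then collapses (because $t\wedge\xi_n=t$) to the right-hand side of \eqref{eq:Z}, which is the claim.

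For completeness I would also note that equivalence forces $Y>0$ $\nu$-a.e.: applying Proposition~\ref{prop:absoluteMC} to the reverse relation $\P\ll\P'$ yields a density $\tilde Y$ with $\tilde Y\,Y\,\nu=\nu$, whence $\tilde Y Y=1$ and the product $\prod_{T_n\le t}Y(T_n,U_n)$ in \eqref{eq:Z} is strictly positive. The step I expect to require the most care is the transfer of non-explosion between $\P$ and $\P'$ together with the passage $n\to\infty$: one must be sure that $\xi_n\uparrow\infty$ genuinely exhausts $[0,\infty)$ rather than merely approaching a finite explosion time, which is precisely what the two compensator-finiteness facts rule out.
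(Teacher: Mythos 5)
Your proof is correct and follows essentially the same route as the paper's: both reduce the statement to Proposition~\ref{prop:absoluteMC}, use the bound $(1-\sqrt{Y})^2\le 1+Y$ together with the finiteness of $\int_0^t\int_{\R^d}\nu(s,du)ds$ and of \eqref{cond1} to conclude that $\xi_n\uparrow\infty$ almost surely, and then pass from the localized formula \eqref{eq:Zn} to \eqref{eq:Z}. The only minor divergence is the argument for \eqref{cond1} itself: you derive it from the standard fact that the compensator of a non-explosive counting process is finite on compacts (transferring non-explosion from $\P$ to $\P'$ by absolute continuity), whereas the paper argues by contradiction that on the set where the integral is infinite the density would vanish, contradicting equivalence --- your version is the more transparent of the two.
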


\begin{proof}
As $\P$ and $\P'$ are equivalent and we consider only non-explosive marked point processes, $\P'(\lim_{n \to \infty} T_n=\infty)=1$. 
Hence $\int_0^t\int_{\R^d} \nu(s,du)ds < \infty$ for all $t>0$, almost surely with respect to $\P$ and $\P'$.

Also, $\int_0^t\int_{\R^d} Y(s,u)\nu(s,du)ds < \infty$: let $A_t:= \{\omega \in \Omega: \int_0^t\int_{\R^d} Y(s,u)\nu(s,du)ds =\infty\}$ be a set with positive probability. Then, $Z$ vanishes on $A_t$ and so $\P$ is not equivalent to $\P'$ which gives a contradiction. Because $Y\nu$ is non-negative $A_t \subset A_{t+\epsilon}$ for all $\epsilon>0$ and  \eqref{cond1} follows $\P'$-almost surely for all $t>0$. Finally,
note that
\begin{align*}
\int_0^t\int_{\R^d} (1-\sqrt{Y(s,u)})^2 \nu(s,du) ds \le \int_0^t \int_{\R^d} (1+Y) \nu(s,du)ds < \infty.
\end{align*}
Hence the $\xi_n$ in Proposition \ref{prop:absoluteMC} tend to infinity with probability 1. Then \eqref{eq:Zn} together with
Proposition III.5.10 in \cite{JacodShiryaev} gives \eqref{eq:Z}.
 \end{proof}

We have the following important result: the shot-noise property is preserved under an absolutely continuous (and hence also under an equivalent) change of measure.

\begin{corollary}
Assume that {\bf (A1)} and {\bf(A2)} hold and $\P'\ll\P$. If $S$ is a shot-noise process under $\P$, then $S$ is a shot-noise process under $\P'$.
\end{corollary}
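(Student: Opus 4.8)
The plan is to exploit the fact that being a shot-noise process in the sense of Definition \ref{def:shotnoise} is, at its core, a pathwise statement: it requires a marked point process $Z=(T_i,U_i)_{i\ge1}$ and a \emph{deterministic} measurable function $G$ such that $S_t=\sum_{i}\ind{T_i\le t}G(t-T_i,U_i)$ holds for all $t\ge0$ (up to a null set). Since a change of measure alters neither the underlying measurable space nor the random variables $S$, $T_i$, $U_i$, nor the function $G$, the only things one has to recheck under $\P'$ are (i) that $Z$ is still a marked point process, and (ii) that the representation continues to hold almost surely. Both will follow quickly from $\P'\ll\P$ together with Proposition \ref{prop:absoluteMC}.

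Concretely, first I would record that, because $S$ is a shot-noise process under $\P$, there is a set $\Omega_0\in\cF$ with $\P(\Omega_0)=1$ on which the identity $S_t=\sum_{i}\ind{T_i\le t}G(t-T_i,U_i)$ holds for every $t\ge0$, with the same deterministic $G$. Absolute continuity $\P'\ll\P$ immediately gives $\P'(\Omega_0)=1$, so the very same representation, with the very same $G$, holds $\P'$-almost surely. Second, Proposition \ref{prop:absoluteMC} guarantees that $Z$ is again a marked point process under $\P'$, now with compensator $Y(t,u)\nu(t,du)dt$. Assembling these two facts, all the ingredients demanded by Definition \ref{def:shotnoise} are present under $\P'$, and $S$ is a shot-noise process under $\P'$.

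The one point that deserves care -- and what I expect to be the only genuine obstacle -- is the qualifier ``possibly explosive'' attached to $Z$ in Proposition \ref{prop:absoluteMC}, since Definition \ref{def:shotnoise} tacitly presumes absolute convergence of the defining sum, which would be jeopardised by an accumulation of the $T_i$ in finite time. This, however, turns out to be harmless: under $\P$ the process $S$ is a bona fide shot-noise process, so both non-explosion $\{\lim_n T_n=\infty\}$ and absolute convergence of $\sum_i\ind{T_i\le t}|G(t-T_i,U_i)|$ for each $t$ hold on a $\P$-full set, and each of these events transfers verbatim to a $\P'$-full set by $\P'\ll\P$. Hence no additional integrability or intensity estimate is needed; the result is essentially a measure-theoretic transfer of an almost-sure pathwise identity, buttressed by the invariance of the marked-point-process structure established in Proposition \ref{prop:absoluteMC}.
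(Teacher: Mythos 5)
Your argument is correct and follows the same route as the paper: the defining representation \eqref{eq:defshotnoise} is a pathwise identity that survives the measure change because $\P'$-null sets contain only $\P$-null sets, and Proposition \ref{prop:absoluteMC} supplies the fact that $Z$ remains a marked point process under $\P'$. Your extra remarks on non-explosion and absolute convergence transferring along $\P'\ll\P$ are a welcome (and correct) elaboration of a point the paper leaves implicit, but they do not change the substance of the argument.
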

\begin{proof}
The result follows immediately from the Definition \ref{def:shotnoise} together with Proposition \ref{prop:absoluteMC}: under $\P'$, the representation \eqref{eq:defshotnoise} of course still holds and by Proposition \ref{prop:absoluteMC} states that $Z=(T_i,U_i)_{i \ge 1}$ is a marked point process under $\P'$.
 \end{proof}
We will see that additional useful properties, like independent increments are not preserved under the change of measure, such that the specific structures of the shot-noise process under both measures can be substantially different.

\subsection{Preserving independent increments}
Recall that we associated with $Z$ the  process $Z'_t=\sum_{T_i \le t} U_i,\ t \ge 0$  which accumulates the jumps of $Z$. 
For tractability reasons one often considers shot-noise processes driven by a marked point process where $Z'$ has independent increments. 
If the increments are moreover stationary and infinitely divisible, the associated process $Z'$ is a L\'evy process. We cover both cases in this section.

\begin{theorem} \label{thm3.4}Assume that $\P\sim\P'$. Let the density process of $\P'$ relative to $\P$ be of the form \eqref{eq:Z}.
\begin{enumerate}
\item If $Z'$ has independent increments under $\P$ and $\P'$, then $Y$ is deterministic.
\item If $Z'$ has independent and stationary increments under $\P$ and $\P'$, then $Y$ is deterministic and does not depend on time.
\end{enumerate}
\end{theorem}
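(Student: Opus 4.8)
The plan is to reduce both statements to the classical characterisation of processes with independent increments in terms of deterministic compensators. Recall that the accumulated-jump process $Z'_t=\int_0^t\int_{\R^d} x\,\mu(ds,dx)$ is a finite-variation semimartingale whose jump measure is $\mu$, and that under \textbf{(A2)} the compensator charges no fixed time, so $Z'$ is stochastically continuous. The key input is the converse of the implication noted before Proposition~\ref{lem:SN}: a process with independent increments has \emph{deterministic} characteristics, and one with independent and stationary increments has characteristics that are in addition linear in time (see \cite{JacodShiryaev}, Chapter~II.4, e.g.\ Theorem~II.4.15 and Corollary~II.4.19). Applied to $Z'$ this gives that $Z'$ has independent increments under a measure if and only if the corresponding compensator of $\mu$ is deterministic, and independent and stationary increments if and only if that compensator is moreover of the form $\eta(dx)\,dt$ for a fixed measure $\eta$.

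For part (1), I would apply this characterisation under $\P$: since $Z'$ has independent increments under $\P$, the $\P$-compensator $\nu(dt,dx)=\nu(t,dx)\,dt$ from \textbf{(A2)} admits a deterministic version, which I call $\nu^\P$. By Proposition~\ref{prop:absoluteMC} the compensator of $\mu$ under $\P'$ equals $Y(t,x)\,\nu(t,dx)\,dt$; since $Z'$ \emph{also} has independent increments under $\P'$, this measure $\nu^{\P'}$ admits a deterministic version too. We then have two deterministic measures on $\R_+\times\R^d$, finite on compacts by non-explosion, with $\nu^{\P'}\ll\nu^\P$. The Radon--Nikodym derivative $\hat Y:=d\nu^{\P'}/d\nu^\P$ of these two deterministic measures is a deterministic function, and since $\nu^{\P'}=Y\,\nu^\P$ holds $\P$-a.s., we get $Y=\hat Y$ off a $(\P\otimes\nu^\P)$-null set. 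As $Y$ enters the density process \eqref{eq:Z} and the $\P'$-compensator only through its values $(\P\otimes\nu^\P)$-a.e., we may replace $Y$ by $\hat Y$, i.e.\ take $Y$ deterministic.

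For part (2), I would run the same argument with the sharper characterisation. Independence \emph{and} stationarity of the increments of $Z'$ under $\P$ force $\nu^\P(dt,dx)=\nu(dx)\,dt$ for a fixed measure $\nu(dx)$, and likewise under $\P'$ they force $\nu^{\P'}(dt,dx)=\nu'(dx)\,dt$ for a fixed measure $\nu'(dx)$. Combining with $\nu^{\P'}=Y\,\nu^\P$ gives $Y(t,x)\,\nu(dx)\,dt=\nu'(dx)\,dt$, so that for Lebesgue-a.e.\ $t$ one has $Y(t,x)\,\nu(dx)=\nu'(dx)$ as measures in $x$. Hence $Y(t,x)=\dfrac{d\nu'}{d\nu}(x)$ for $\nu$-a.e.\ $x$ and a.e.\ $t$; the right-hand side depends neither on $\omega$ nor on $t$, so after the same harmless modification on a $\nu^\P$-null set, $Y$ is deterministic and time-independent.

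The main obstacle I expect is not the algebra but the careful handling of the characterisation and of the a.e.\ identifications. Concretely, one must (i) justify that ``independent increments of $Z'$'' is equivalent to ``deterministic compensator of $\mu$'' in the marked-point-process setting --- the semimartingale property of $Z'$ is automatic here and equivalence of measures preserves it, but one should verify the hypotheses of the cited characterisation, in particular stochastic continuity, which \textbf{(A2)} supplies; and (ii) argue that $Y$, a priori only a $\cP\otimes\cB(\R^d)$-measurable function determined up to the Dol\'eans measure of $\mu$, may genuinely be taken deterministic --- this is exactly the assertion that it agrees $(\P\otimes\nu^\P)$-a.e.\ with the deterministic Radon--Nikodym density, and that altering $Y$ off $\operatorname{supp}\nu^\P$ changes neither the density process nor the $\P'$-compensator.
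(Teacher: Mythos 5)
Your proposal is correct and follows essentially the same route as the paper: both arguments rest on the characterisation of independent increments via a deterministic compensator, identify the $\P'$-compensator as $Y\,\nu$ via Proposition~\ref{prop:absoluteMC}, and conclude that $Y$ must coincide (up to null sets) with the deterministic Radon--Nikodym derivative of the two compensators, with stationarity removing the time dependence in part (2). Your treatment of the a.e.\ identification and of stochastic continuity is more explicit than the paper's, but it is the same proof.
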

\begin{proof}
$Z'$ is a process with independent increments (PII), if and only if its compensator is deterministic, see \cite{JacodShiryaev}. Hence, if $Z'$ is a PII  under $\P$, then $\nu(\omega,t,dx)=\nu(t,dx)$ is deterministic.
By Proposition \ref{prop:absoluteMC},  $Z'$ has a deterministic compensator under $\P'$ if and only if $Y(\omega,t,u)\nu(t,du)$ is deterministic and hence
$Y(\omega,t,u)=Y(t,u)$ is deterministic.
Stationarity is equivalent to $\nu$ being independent of time and so (ii)  follows analogously.
 \end{proof}

\begin{example}[The Esscher measure] Consider a generic $n$-dimensional stochastic process $X$. Then the Esscher measure  (\cite{Esscher1932}) is given by the density
$$ L_t = \frac{e^{h X_t}}{\E(e^{h X_t})} $$
where $h \in \R^d$ is chosen in such a way  that $Z$ is a martingale.
\cite{EscheSchweizer2005} showed that the Esscher measure preserves the L\'evy property, in a specific context. It is quite immediate that if applied to a model for stock prices driven by shot-noise processes this property will not hold in general. \cite{DassiosJang2003} applied the Esscher measure to Markovian shot-noise processes.
\end{example}

\begin{example}[The minimal martingale measure] The minimal martingale measure as proposed in \cite{FoellmerSchweizer1990} for a certain class of shot-noise processes has been analysed in \cite{SchmidtStute07}. It can be described as follows: consider the special semimartingale $X$ in its semimartingale decomposition $X=A+M$ where $A$ is an increasing process of bounded variation and $M$ is a local martingale. Assume that there exists a process $\ell$ which satisfies
$$ A_t = \int_0^t \ell_s d \langle M \rangle _s. $$
Then the density of the minimal martingale measure with respect to $\P$ is given by
$$ L = \cE \left( \int_0^{\cdot} \ell_{s-} dM_s \right). $$
Here $\cE$ denotes the Doleans-Dade stochastic exponential, i.e.\ $L$ is the solution of $dL_t=L_{t-} \ell_{t-}dM_t$. The minimal martingale measure need not exist in general. From \eqref{def:X}, proceeding as in the proof of Proposition 4.1 in \cite{SchmidtStute07}, we obtain that
\begin{align*}
  \ell_{t-} &= \frac{1}{X_{t-}} \frac{\mu + \sum_{T_i < t} g(t-T_i,U_i) + \int_{\R^d}(e^{G(0,x)}-1) \nu(t,ds)}{ \int_{\R^d}(e^{G(0,x)}-1)^2 \nu(t,ds)}.
\end{align*}
Conditions which ensure that the minimal martingale measure is indeed a probability measure can be found in \cite{SchmidtStute07}.

From Theorem \ref{thm3.4} it is clear, that the minimal martingale measure will not preserve independent increments of $Z'$ - a property which makes this measure less tractable for financial applications. In the following section, we propose an alternative to this approach.
\end{example}

\subsection{The drift condition}\label{sec:driftcondition}
We consider the equivalent measure $\P'\sim \P$ and assume that {\bf (A1)} and {\bf (A2)} hold. Then Theorem \ref{thm:density1} gives the relationship between both measures and $Z$ is again a marked point process under $\P'$. The compensator of $\mu$ under $\P'$ is given by
$$ \nu'(dt,tx)=\nu'(t,dx)dt = \nu(t,dx) Y(t,x) dt. $$
In addition, we consider, as in Equation \eqref{def:X} a Brownian motion $W$, which is (under $\P$) independent of $Z$.
By the equivalent change of measure, there exists a market price of (diffusive) risk $\xi$, such that $W'=W+\int_0^\cdot \xi_s ds$ is a $\P'$-Brownian motion, see \cite{JacodShiryaev}, Theorem III.3.24.  

We assume that discounting takes place via a bank account with short rate $r$ - here $r$ is a progressively measurable process such that $\int_0^t r_s ds < \infty$ $\P$-a.s.\ for all $t \ge 0$.

\begin{theorem}
The equivalent measure $\P'$ is a (local) martingale measure, if 
\begin{align}\label{dc}
r_t &= \mu- \sigma \xi_t + \int_{\R^d} g(t-s,x)\nu'(t,dx) + \int_{\R^d}\Big(e^{G(0,x)}-1\Big) \nu'(t,dx)
\end{align}
$d\P\otimes dt$-almost surely for all $t \ge 0$.
\end{theorem}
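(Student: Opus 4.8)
The plan is to show that the discounted stock price $\tilde X_t = e^{-\int_0^t r_s ds} X_t$ is a local $\P'$-martingale by computing its semimartingale decomposition under $\P'$ and verifying that the finite-variation (drift) part vanishes precisely when the drift condition \eqref{dc} holds. First I would write $X_t = X_0 e^{Y_t}$ with
\begin{align*}
Y_t = \mu t + \sigma W_t - \tfrac{\sigma^2 t}{2} + \int_0^t \sum_{T_i \le s} g(s-T_i,U_i)\,ds + \sum_{T_i \le t} G(0,U_i),
\end{align*}
and then pass to the $\P'$-dynamics: by Girsanov's theorem for the Brownian part (Theorem III.3.24 in \cite{JacodShiryaev}) we have $W_t = W'_t - \int_0^t \xi_s\,ds$ with $W'$ a $\P'$-Brownian motion, and by Theorem \ref{thm:density1} the compensator of $\mu$ under $\P'$ is $\nu'(t,dx)dt = Y(t,x)\nu(t,dx)dt$.

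Next I would apply It\^o's formula to $\tilde X_t = X_0 e^{-\int_0^t r_s ds + Y_t}$, separating the continuous and the jump contributions. The continuous part produces the drift terms $\big(-r_t + \mu - \sigma\xi_t - \tfrac{\sigma^2}{2} + \sum_{T_i \le t} g(t-T_i, U_i)\big)\tilde X_t\,dt$ together with the quadratic-variation correction $\tfrac{1}{2}\sigma^2 \tilde X_t\,dt$ from the Brownian term, the latter cancelling the $-\tfrac{\sigma^2}{2}$. The jump part, driven by $\mu(ds,dx)$, contributes $\tilde X_{s-}(e^{G(0,x)}-1)$ at each shot; I would split this integrator against $\mu$ into its $\P'$-compensator $\nu'(s,dx)ds$ plus a local $\P'$-martingale. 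Collecting terms, the total finite-variation part of $\tilde X$ equals
\begin{align*}
\tilde X_{t-}\Big( -r_t + \mu - \sigma\xi_t + \sum_{T_i \le t} g(t-T_i,U_i) + \int_{\R^d}(e^{G(0,x)}-1)\nu'(t,dx)\Big)\,dt,
\end{align*}
and requiring this to vanish $d\P \otimes dt$-a.s.\ yields exactly \eqref{dc}, identifying $\sum_{T_i \le t} g(t-T_i,U_i)$ with $\int_{\R^d} g(t-s,x)\nu'(t,dx)$ in the integral notation used there.

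The subtle point is the correct treatment of the integral term $\int_0^t \sum_{T_i \le s} g(s-T_i,U_i)\,ds$ in $Y_t$: this is an absolutely continuous (hence continuous, finite-variation) process, so it contributes only to the drift of $\tilde X$ and not to any martingale or quadratic-variation term, and its instantaneous rate at time $t$ is $\sum_{T_i \le t} g(t-T_i,U_i)$. One must also keep the left-limit $\tilde X_{t-}$ in the jump term so that the integrand is predictable before compensating $\mu$; since the drift and the jump-compensator terms differ from $\tilde X_{t-}$ by a term vanishing $d\P\otimes dt$-a.e., replacing $\tilde X_{t-}$ by $\tilde X_t$ in the final condition is harmless. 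The main obstacle is therefore bookkeeping rather than analysis: ensuring the jump compensator uses $\nu'$ (not $\nu$) and that no Brownian cross-variation with the jump part appears, which is guaranteed by the $\P$-independence of $W$ and $Z$ being preserved qualitatively under the equivalent change of measure. Finally, with the drift eliminated, $\tilde X$ equals its local-martingale part, establishing that $\P'$ is a local martingale measure; to promote ``local'' to a genuine martingale measure one would additionally impose an integrability condition, which the statement leaves implicit by phrasing the conclusion as ``(local) martingale measure''.
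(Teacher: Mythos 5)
Your proof is correct and follows essentially the same route as the paper's: It\^o's formula to obtain the semimartingale decomposition of $X$, Girsanov for the Brownian part ($W'=W+\int_0^\cdot \xi_s\,ds$), compensation of $\mu$ by the $\P'$-compensator $\nu'$, and discounting, with the drift condition read off from the vanishing of the finite-variation part. Your identification of the noise term as $\sum_{T_i\le t} g(t-T_i,U_i)=\int_0^t\int_{\R^d} g(t-s,x)\,\mu(ds,dx)$ agrees with the paper's own subsequent formula for $\xi_t$; you merely supply the bookkeeping that the paper's very terse proof leaves implicit.
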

\begin{proof}
We first derive the semimartingale representation of $X$. By It\^o's formula and \eqref{semimartdecomp},
\begin{align*}
dX_t &= X_{t-} \bigg(    \mu dt + \sigma dW_t + \int_0^t\int_{\R^d} g(t-s,x)\mu(ds,dx) \bigg) \\
&+ \int_{\R^d}X_{t-}\Big(e^{G(0,x)}-1\Big) \mu(dt,dx).
\end{align*}
As already mentioned, the equivalent change of measure allows to introduce a drift $\xi$ to the Brownian motion, such that
$ W'=W+\int_0^\cdot \xi_s ds $ is a $\P'$-Brownian motion. Compensating $\mu$ with the $\P'$-compensator $\nu'$ and discounting in the usual way gives the result. 
 \end{proof}

It is apparent that typically there will be many solutions of the drift condition. With a view on tractability it is reasonable to impose that the marked point process $Z$ has independent (and possibly stationary) increments under $\P$ and $\P'$. From Theorem  \ref{thm3.4} it follows that this is the case if the function $Y$ is deterministic (and does not depend on time). Then, from Equation \eqref{dc} we obtain the following condition on the drift $\xi$,
\begin{align}
\xi_t &= \sigma^{-1} \Big(\mu -r_t + \int_0^t\int_{\R^d} g(t-s,x)\mu(ds,dx) + \int_{\R^d}\Big(e^{G(0,x)}-1\Big) Y(t,x)\nu(t,dx) \Big).
\end{align}
If there were no jumps, we obtain that $\xi$ coincides with the classical market price of (diffusive) risk, $\sigma^{-1}(\mu-r)$. 
\begin{example}[Independent and stationary increments under both measures] Fix a finite time horizon $T^*$ and assume that $Z'$ has independent and stationary increments, i.e.~$\nu(t,dx)=\lambda F(dx)$ where $F$ is the distribution of $U_1$ and $\lambda>0$ is the arrival rate of the jumps. Assume that $F'$ is equivalent to $F$, i.e.~ $F'(dx)=\eta(x) F(dx)$ and $\lambda'>0$. Then an equivalent change of measure is obtained via $Y(t,x)=\frac{\lambda'}{\lambda}\eta(x)$. In this case, the arrival rate of jumps under $\P'$ is $\lambda'$ and the jumps sizes are again i.i.d.~with distribution $F'$. Assume that $\int e^{G(0,x)}F'(dx)<\infty$ and let $\xi$ be such that 
\begin{align}
\xi_t &= \sigma^{-1} \Big(\mu -r_t + m_1 + \int_0^t\int_{\R^d} g(t-s,x)\mu(ds,dx)  \Big)  
\end{align}
with $m_1:= \int_{\R^d}\Big(e^{G(0,x)}-1\Big) \lambda' F'(dx)$. If furthermore the process
$$ \bigg(\cE \Big(\int_0^t\xi_s dW_s \Big)\bigg)_{0 \le t \le T^*} $$
is a true martingale, then $\P'$ is an equivalent (local) martingale measure. 
\end{example}

\end{document}